\newtheorem{theorem}{Theorem}
\begin{document}

%\preprint{APS/123-QED}

\title{Time-Reversal Soliton Pairs in Even Spin-Chern-Number Higher-Order Topological Insulators}

%% Authors %%%

\author{Yi-Chun Hung}
\affiliation{Department of Physics, Northeastern University, Boston, Massachusetts 02115, USA}
\affiliation{Institute of Physics, Academia Sinica, Taipei 115201, Taiwan}
\affiliation{Department of Physics, National Taiwan University, Taipei 106216, Taiwan}

\author{Baokai Wang}
\affiliation{Department of Physics, Northeastern University, Boston, Massachusetts 02115, USA}

\author{Chen-Hsuan Hsu}
\affiliation{Institute of Physics, Academia Sinica, Taipei 115201, Taiwan}
                
\author{Arun Bansil}
\affiliation{Department of Physics, Northeastern University, Boston, Massachusetts 02115, USA}

\author{Hsin Lin}
\email{nilnish@gmail.com}
\affiliation{Institute of Physics, Academia Sinica, Taipei 115201, Taiwan}

% \date{\today}% It is always \today, today,
             %  but any date may be explicitly specified

\begin{abstract}
{Solitons formed through the one-dimensional mass-kink mechanism on the edges of two-dimensional systems with non-trivial topology play an important role in the emergence of higher-order (HO) topological phases. In this connection, the existing work in time-reversal symmetric systems has focused on gapping the edge Dirac cones in the presence of particle-hole symmetry, which is not suited to the common spin-Chern insulators. Here, we address the emergence of edge solitons in spin-Chern number of $2$ insulators, in which the edge Dirac cones are gapped by perturbations preserving time-reversal symmetry but breaking spin-$U(1)$ symmetry. Through the mass-kink mechanism, we thus explain the appearance of pairwise corner modes and predict the emergence of extra charges around the corners. By tracing the evolution of the mass term along the edge, we demonstrate that the in-gap corner modes and the associated extra charges can be generated through the $S_z$-mixing spin-orbit coupling via the mass-kink mechanism. We thus provide strong evidence that an even spin-Chern-number insulator is an HO topological insulator with protected corner charges.}
\end{abstract}

\maketitle

\section{Introduction}\label{section:I}
Analysis of topological properties of the ground state electronic structures has yielded many new insights into the nature of quantum matter. The bulk-boundary correspondence not only manifests itself through the appearance of gapless edge states, but it can also lead to topological solitons and the associated topologically protected charges \cite{PhysRevD.13.3398, PhysRevB.22.2099, RevModPhys.60.781, PhysRevB.100.205126, PhysRevLett.46.738, PhysRevLett.49.1455}. These solitons are typically generated at the interfaces that separate two regions characterized by (complex) mass terms with different phases, which we will refer to as \emph{the mass-kink mechanism} in one-dimensional (1-D) systems. In the low-energy states on the edges of a two-dimensional (2-D) topological insulator, different phases of the massive edge-Dirac-cones can also induce corner solitons, indicating the presence of higher-order (HO) topological phases with protected extra corner charges \cite{PhysRevB.99.155102, PhysRevB.97.205134, PhysRevB.100.205126, Mu2022, PhysRevResearch.2.033071, PhysRevLett.126.066401}. In this connection, the existing literature in time-reversal-symmetric systems has focused on gapping the Dirac cones with particle-hole symmetry \cite{PhysRevB.77.041406, PhysRevB.92.235435, PhysRevLett.108.196804, PhysRevLett.121.196801, PhysRevLett.111.056403, PhysRevB.99.075129}. However, the role of the mass-kink mechanism in gapping Dirac cones in particle-hole symmetry-breaking systems is not well understood. Here, we examine how, in even spin-Chern-number insulators, the edge states can be gapped through spin-$U(1)$-symmetry-breaking terms in the Hamiltonian. 

Even spin-Chern-number insulators have been demonstrated in several recent tight-binding models \cite{PhysRevB.94.235111, PhysRevB.104.L201110}. In the absence of $S_z$-mixing spin-orbit coupling (SOC), these models support insulators with a spin-Chern number ($C_s$) of 2, which implies the presence of two spin-polarized gapless edge states protected by $S_z$ conservation (Fig.~\ref{fig:01}). When the $S_z$-mixing SOC is turned on, the edge states are gapped (Fig.~\ref{fig:02}) without changing the spin-Chern number \cite{PhysRevB.80.125327, shulman2010robust}. The corner modes appear pairwise within the edge band gap in a finite-sized thin film due to time-reversal symmetry. However, the origin of these corner modes is not well understood. 

Since the edge states are gapped, the non-vanishing mass terms are driven by the $S_z$-mixing SOC. The mass kink could thus be anticipated to account for the corner modes. Through Bosonization, we will show how the corner modes and the associated corner charges can be attributed to the two time-reversal-related edge solitons via the mass-kink mechanism. These solitons form a \emph{time-reversal-related soliton pair} \cite{Schindler2022} and provide an explanation of the existence of pairwise in-gap corner modes with extra charges in a finite-sized thin film.
We illustrate our new approach by considering the model of Ref. \cite{PhysRevB.104.L201110} as a concrete example with $C_s=2$ and $S_z$-mixing SOC-induced mass terms.
By tracking the phase evolution of the corner modes along the edge of a thin film, we show that the corner modes and the associated extra charge arise from a time-reversal soliton pair created by the $S_z$-mixing SOC-induced mass-kink mechanism. These results indicate that even in the absence of gapless edge Dirac cones derived from crystalline symmetries, an HO topological insulator with protected corner charges can emerge from a $C_s=2$ spin-Chern insulator.

This article is organized as follows. In Section~\ref{section:II}, we construct an effective model for the edge-hosting two Dirac cones, along with a general mass term. We show how a time-reversal-related soliton pair arises in the system, leading to an HO topological phase with corner modes. Through a renormalization group (RG) analysis, we show that our results are robust against the presence of weak electron-electron interaction. In Section~\ref{section:III}, we numerically confirm our results by considering a concrete example of a time-reversal-symmetric generalization of the Haldane model in which the mass term is induced via $S_z$-mixing SOC. We summarize our results in Section~\ref{section:IV}. Details of the formalism in the main text are given in the appendices. Bosonization conventions, some basic properties, and the general mass term are presented in Appendix~\ref{section:A}, Appendix~\ref{section:B}, and Appendix~\ref{section:C}, respectively.

\begin{figure}[ht]
  \centering
  \centering
    \includegraphics[width=8.6cm, height=7.5cm]{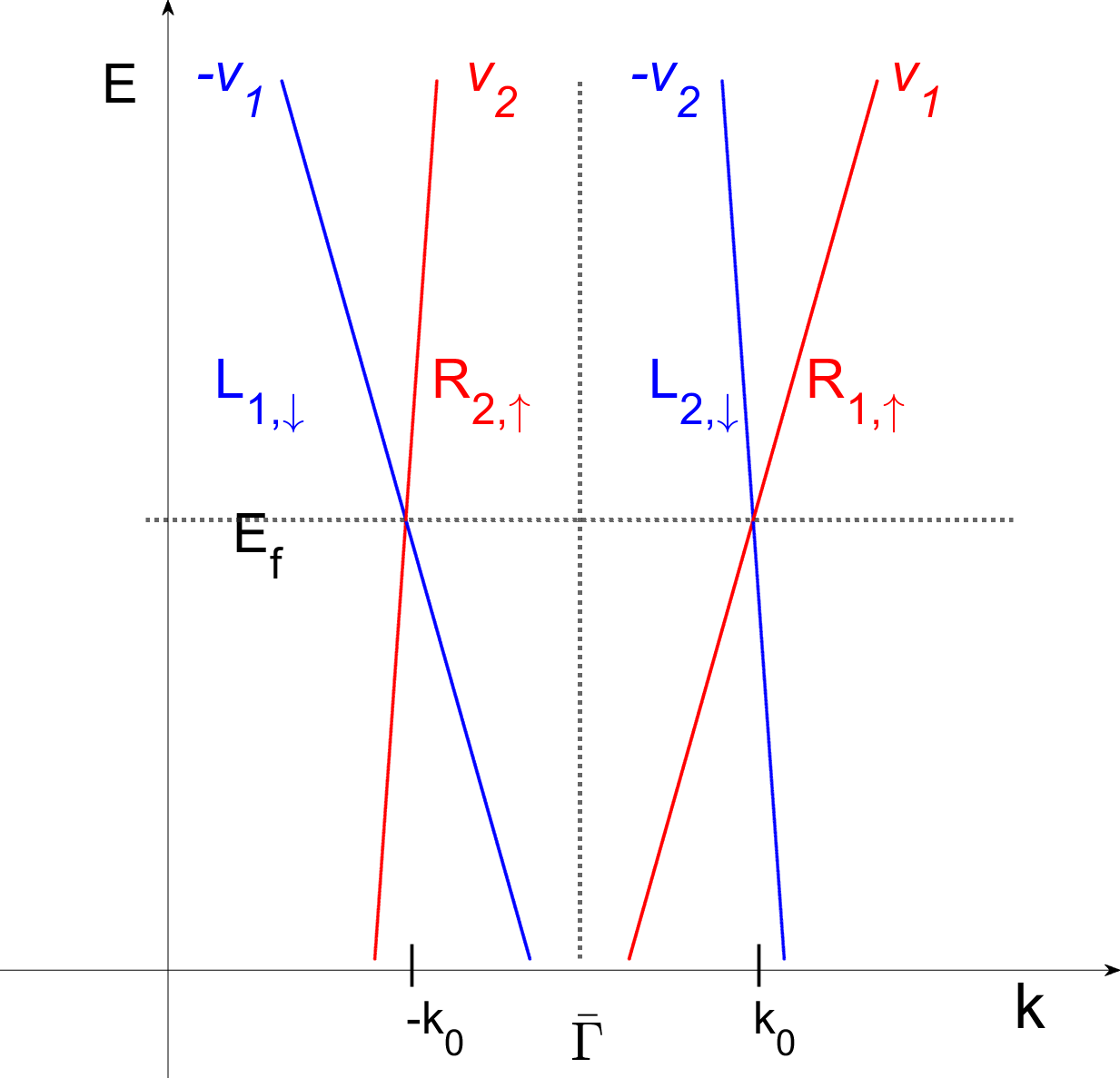}
  \caption{A schematic diagram of our picture of the edge modes before adding the gap-opening perturbation, where the index $1(2)$ indicates the $1^{st}(2^{nd})$ time-reversal sector; $R(L)$ refers to the right (left ) moving mode and $v$ is the Fermi velocity. $\pm k_0$ values of the edge Dirac cones in the edge Brillouin zone are labeled.}
  \label{fig:01}
\end{figure}

\begin{figure}[ht]
  \centering
  \centering
    \includegraphics[width=8.6cm, height=7.5cm]{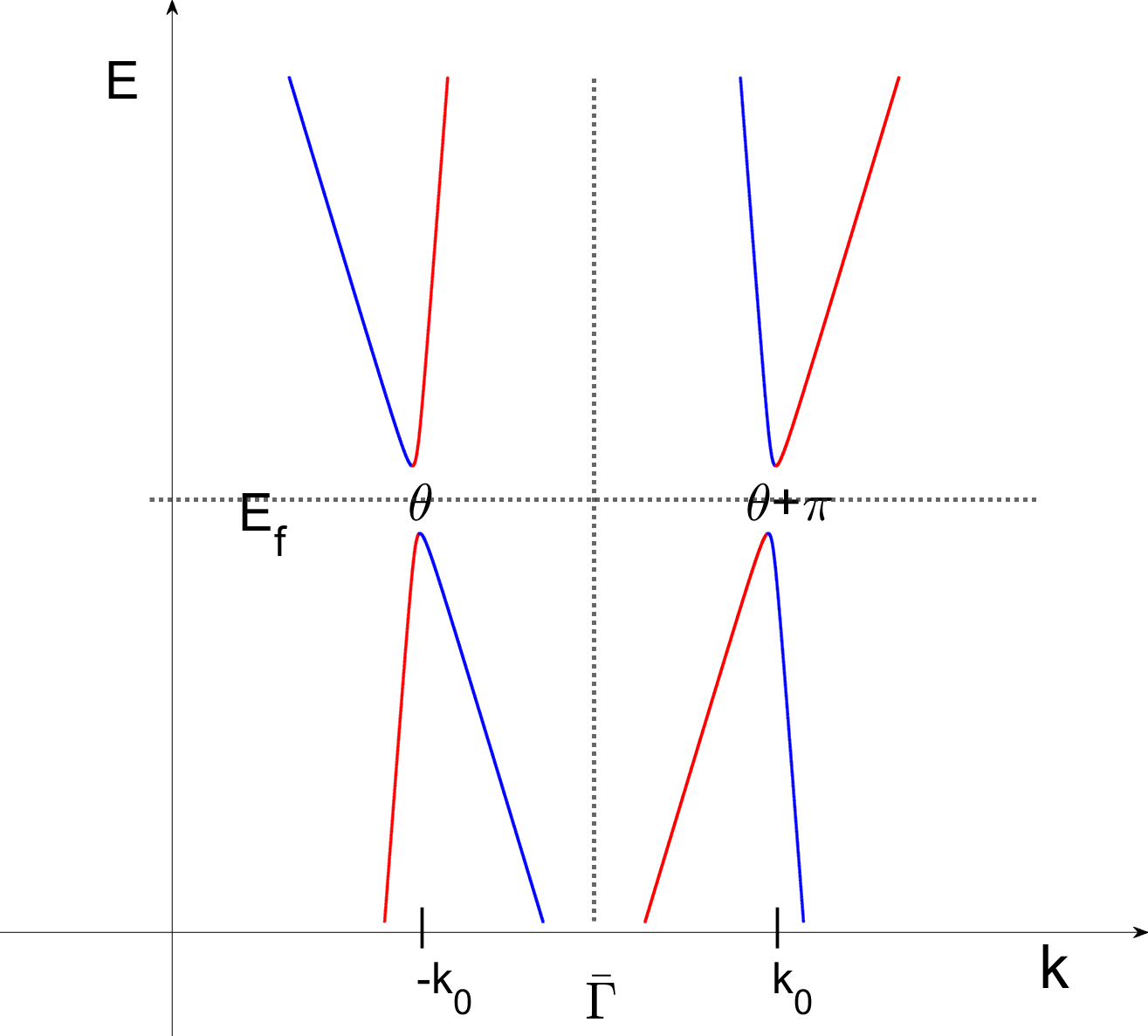}
  \caption{A schematic diagram of our picture of the edge modes after adding the gap-opening perturbation. With the presence of time-reversal symmetry, the mass terms of different Dirac cones have a $\pi$-phase difference in their phase $\theta$. In our case, the perturbation is the $S_z$-mixing SOC.}
  \label{fig:02}
\end{figure}

% ===========================================
\section{The Edge Theory of a $C_s=2$ Spin-Chern insulator}\label{section:II}
\par We start by considering an edge theory of spin-Chern insulator with $C_s=2$. Before introducing perturbations, we have two edge Dirac cones located at the momenta $\pm k_0$ (with $k_0>0$) at the Fermi level due to the non-vanishing spin-Chern number, as displayed in Fig.~\ref{fig:01}. Due to the helical nature of the edge states, the right-moving modes are locked with spin up while the left-moving modes lock with spin down (Fig.~\ref{fig:01}). Unlike systems hosting a single Dirac cone at a time-reversal invariant momentum, each of the Dirac cones is not formed by time-reversal partners. Rather, each of them emerges from right-moving and left-moving modes deriving from distinct time-reversal sectors. Therefore, there is no Kramers' degeneracy in our low-energy theory. Since different time-reversal sectors can have different Fermi velocities, the Dirac cones are slightly tilted (Fig.~\ref{fig:01}).

\par Following the strategy used in the literature \cite{PhysRevB.99.155102, PhysRevB.92.035139, PhysRevB.93.235436}, we map our edges around a corner into a 1-D system in which the corner is at \(x=0\) connecting two edges represented by the $x>0$ and $x<0$ regions. We write the edge-theory model shown in Fig.~\ref{fig:01} into real space as:
\begin{equation}\label{eq:07}
\begin{split}
    H_0(x) & = -iv_1(R^\dagger_1 \partial_x R_1 - L^\dagger_1 \partial_x L_1) 
    \\ & \quad -iv_2(R^\dagger_2 \partial_x R_2 - L^\dagger_2 \partial_x L_2),
\end{split}
\end{equation} 
where \(v_{1(2)}\) is the Fermi velocity for the \(1^{st}(2^{nd})\) time-reversal sector. We then define the \emph{average Fermi velocity} \(v=\frac{v_1+v_2}{2}\) and \emph{its deviation between two time-reversal sectors} \(\delta v=\frac{v_2-v_1}{2}\). Eq.(\ref{eq:07}) then becomes:
\begin{equation}\label{eq:08}
\begin{split}
    H_0(x) & = -iv\sum_{\mu=1}^2(R^\dagger_\mu \partial_x R_\mu - L^\dagger_\mu \partial_x L_\mu) 
    \\ & \quad -i\delta v\sum_{\mu=1}^2 \epsilon^{\mu\nu}(R^\dagger_\nu \partial_x R_\nu - L^\dagger_\nu \partial_x L_\nu), 
\end{split}
\end{equation} 
which can be further Bosonized, with the convention in Appendix~\ref{section:A}, into \cite{PhysRevB.92.035139}:
\begin{equation}\label{eq:09}
\begin{split}
    H_0^{(B)}(x) & =
    \frac{1}{2}\sum_{\mu=1}^2 v[(\partial_x\varphi_\mu)^2+(\partial_x\vartheta_\mu)^2] + 
    \\ & \quad + \frac{1}{2}\sum_{\mu=1}^2\delta v\epsilon^{\mu\nu}[(\partial_x\varphi_\nu)^2+(\partial_x\vartheta_\nu)^2] ,
\end{split}
\end{equation}    
which can be further written as:
\begin{equation}\label{eq:10}
\begin{split}
    H_0^{(B)}(x) & = \frac{v}{2}\{[(\partial_x\varphi_c)^2 + (\partial_x\vartheta_c)^2 + (\partial_x\varphi_d)^2+ (\partial_x\vartheta_d)^2] 
    \\ & \quad - 2\frac{\delta v}{v}[\partial_x\varphi_c\partial_x\varphi_d + \partial_x\vartheta_c\partial_x\vartheta_d]\} .
\end{split}
\end{equation} 
Upon adding perturbations that break the spin rotational symmetry but preserve the time-reversal symmetry, the Dirac points can be gapped, and the low-energy theory acquires a mass term, which we introduce next. 

%% ============================================
\subsection{Mass Term} 
\par We consider time-reversal-symmetric perturbations that can open band gaps at the Dirac cones described in Eq.(\ref{eq:07}). While the $S_z$-mixing SOC, which is naturally present in real materials, offers an illustrative example of such perturbations, here we introduce a general mass term for generality.
Consider the operation of time-reversal symmetry in this system first:
\begin{align}\label{eq:11}
    \Theta \begin{pmatrix} R_1 \\ L_2 \end{pmatrix} = \begin{pmatrix} L_1^* \\ -R_2^* \end{pmatrix} ,
\end{align}
with the time-reversal operator $\Theta$. This allows us to regard the Hamiltonian in Eq.(\ref{eq:07}) as a sum of two time-reversal related sub-Hamiltonians \(H_0^{(+)}(x)\) and \(H_0^{(-)}(x)=\Theta H_0^{(+)}(x)\Theta^{-1}\):
\begin{equation}\label{eq:12}
\begin{split}
    H_0(x) & = H_0^{(+)}(x) + H_0^{(-)}(x),
    \\ H_0^{(+)}(x) & = -iv_1R^\dagger_1\partial_x R_1 + iv_2L^\dagger_2\partial_xL_2 .
\end{split}
\end{equation}
The $H_0^{(\pm)}(x)$ here can be understood as describing the Dirac fermion around $\pm k_0$. It is more convenient to have the representation of time-reversal symmetry in the \(\begin{pmatrix} R_1 \, L_2 \, R_2 \, L_1\end{pmatrix}^T\) basis:
\begin{align}\label{eq:13}
    \Theta = \begin{pmatrix} 0 & 0&  0&  1 \\ 0&  0&  -1&  0 \\ 0&  1&  0&  0 \\ -1&  0&  0&  0 \end{pmatrix} K,
\end{align}
with the complex conjugate $K$. 

Hence, if the mass term in one of the sub-Hamiltonian, say, \(H_M^{(+)}(x)\), is given by
\begin{equation}\label{eq:14}
    me^{-i\theta}R^\dagger_1L_2 + me^{i\theta}L^\dagger_2R_1,
\end{equation}
with \(m>0\), the mass term in the full Hamiltonian can be obtained through Eq.(\ref{eq:12}) and Eq.(\ref{eq:13}), which gives 
\begin{eqnarray}\label{eq:15}
\begin{split}
    H_M = &  H_M^{(+)}(x) + H_M^{(-)}(x) \\
             %= & ( me^{-i\theta}R^\dagger_1L_2 + H.c.) + ( - me^{-i\theta}R^\dagger_2L_1 + H.c.) \\
             = & ( me^{-i\theta}R^\dagger_1L_2 + H.c.) + ( me^{-i(\theta+\pi)}R^\dagger_2L_1 + H.c.).
\end{split}
\end{eqnarray}
There will thus be a $\pi$ phase difference between the mass terms of the two sub-Hamiltonians $H_M^{(+)}(x)$ and $H_M^{(-)}(x)$, as indicated in Fig.~\ref{fig:02}. 

Using Eq.(\ref{eq:01})--(\ref{eq:05}), Eq.(\ref{eq:15}) can be rewritten as:
\begin{equation}\label{eq:17}
H_M^{(B)} = \frac{2m}{\pi a}\sin(\sqrt{2\pi}\vartheta_d)\sin(\theta+\sqrt{2\pi}\varphi_c),
\end{equation}
which has minima at
\begin{align}\label{eq:18}
\vartheta_d=(1\pm\frac{1}{2})\sqrt{\frac{\pi}{2}},\,\varphi_c=-\frac{\theta}{\sqrt{2\pi}}-(1\pm\frac{1}{2})\sqrt{\frac{\pi}{2}} .
\end{align}

\par Notably, the mass terms here are generated by perturbations that do not close and reopen the bulk band gap and only influence the low-energy theory and hardly change the Fermi velocities determined by the overall band structure. Thus, the matrix elements of the \(H_M\) can be calculated by expanding the perturbation term $\Delta H$ in the Hamiltonian on the eigenstates of \(H_0\), which we demonstrate in Section \ref{section:III} through a concrete example.

% -------------------------------------------
\subsection{Bulk-Boundary Correspondence}
\par To derive the conserved current and hence the conserved charge in the system shown in Fig.~\ref{fig:02}, we use the following notation:
\begin{equation}\label{eq:19}
\begin{split}
    & \gamma^0 = \begin{pmatrix} 0 & 1 \\ 1 & 0 \end{pmatrix}, \gamma^1 = \begin{pmatrix} 0 & 1 \\ -1 & 0 \end{pmatrix},\gamma^5 = \begin{pmatrix} -1 & 0 \\ 0 & 1 \end{pmatrix},
    \\ & \psi_\mu = \begin{pmatrix} L_\mu \\ R_\mu \end{pmatrix}, \bar{\psi}_\mu = \psi_\mu^\dagger\gamma^0,
    \quad \mu=1,2.
\end{split}
\end{equation} 
This allows us to rewrite the Hamiltonian with gap-opening perturbation \(H=H_0+H_M\) into:
\begin{equation}\label{eq:20}
\begin{split}
    H & = -i\sum_\mu v\bar{\psi}_\mu\gamma^1\partial_x\psi_\mu + \delta v \epsilon^{\mu\nu}\bar{\psi}_\nu\gamma^1\partial_x\psi_\nu
    \\ & \quad + \sum_{\mu\neq\nu}\bar{\psi}_\mu(-m\cdot \cos\theta\gamma^5+(-1)^\mu i\cdot m\cdot \sin\theta\mathbb{I})\psi_\nu,
\end{split}
\end{equation} 
which gives rise to the corresponding Lagrangian: 
\begin{equation}\label{eq:21}
\begin{split}
    L & = i\sum_\mu[v\bar{\psi}_\mu\slashed{\partial}\psi_\mu - \delta v\epsilon^{\mu\nu}\bar{\psi}_\nu\slashed{\partial}\psi_\nu] 
    \\ & \quad -\sum_{\mu\neq\nu}\bar{\psi}_\mu(-m\cdot \cos\theta\gamma^5 + (-1)^\mu i\cdot m\cdot \sin\theta\mathbb{I})\psi_\nu.
\end{split}
\end{equation} 

Thus, the conserved currents are:
\begin{equation}\label{eq:23}
\begin{split}
    j^\sigma & \propto \sum_\mu \bar{\psi}_\mu\gamma^\sigma\psi_\mu - \frac{\delta v}{v}\epsilon^{\mu\nu}\bar{\psi}_\nu\gamma^\sigma\psi_\nu
    \\ & = \sqrt{\frac{2}{\pi}}(\epsilon^{\sigma\eta}\partial_\eta\varphi_c + \frac{\delta v}{v} \epsilon^{\sigma\xi}\partial_\xi\varphi_d).
\end{split}
\end{equation} 

Therefore, the total charge carried by the solitons is:
\begin{equation}\label{eq:24}
    Q_s^0 \propto e\int j^0 dx
     = e \sqrt{\frac{2}{\pi}} \int (\partial_x\varphi_c - \frac{\delta v}{v}\partial_x\varphi_d) dx,
\end{equation}
where $e$ is the electron charge. Through Eq.(\ref{eq:18}), we approximate \(\partial_\mu\varphi_c\cong\frac{-1}{\sqrt{2\pi}}\partial_\mu\theta\) near the minimum of \(H_M^{(B)}\) and apply Eqs.(\ref{eq:04})--(\ref{eq:05}) to Eq.(\ref{eq:24}):
\begin{equation}
\begin{split}\label{eq:25}
    Q_s^0 & \cong -\frac{e}{\pi} \int (\partial_x\theta + \pi\frac{\delta v}{v}(n_1 - n_2)) dx
    \\ & = -\frac{e}{2\pi}2(\theta(x\rightarrow\infty)-\theta(x\rightarrow-\infty)) - e\frac{\delta v}{v}(N_1 - N_2),
\end{split}
\end{equation} 
where \(n_i\) and \(N_i\) are the particle number density and the particle number of the \(i^{th}\) time-reversal sector, respectively. In the absence of external fields that tilt the Fermi level, we have $N_1 = N_2$ and hence
\begin{align}\label{eq:25.5}
    Q_s^0 = -\frac{e}{2\pi}2(\theta(x\rightarrow\infty)-\theta(x\rightarrow-\infty)).
\end{align}
 Note that, since $e$ is negative, the soliton charge $Q_s^0$ is positive, and the soliton particle number \(N_s = \frac{Q_s^0}{e}\) is negative. 
 While we focus on time-reversal-invariant systems here, Eq.~(\ref{eq:25.5}) can apply to systems without time-reversal symmetry, indicating that the soliton pairs can be created by applying time-reversal symmetry-breaking perturbations that gap the edge states protected by some symmetries \cite{PhysRevB.94.235111, PhysRevB.104.L201110} (see Appendix~\ref{section:C}).
 
\par Since the charge density $j^0$ is proportional to $\partial_x\theta$ according to the above approximations, a soliton appears around a spatial point where $\theta$ develops a kink. For smooth and clean edges where local perturbations due to spatial inhomogeneity are absent, we expect such a kink in $\theta$ only around a corner at which the adjacent edges are characterized by different values of $\theta$. Therefore, the bulk-boundary correspondence states that the total soliton charge \(Q_s^0\) around a corner (i.e., the boundary of a 1D system) can be obtained from the phase difference between the mass terms \(\theta(x\rightarrow\pm\infty)\) on the adjacent edges (i.e., the bulk of a 1D system). From the perspective of the two time-reversal sectors described by Eq.(\ref{eq:12}), the total soliton charge can be considered as the contribution of two identical soliton charges. Each sub-Hamiltonian features exponentially localized domain-wall states at the edge intersections (i.e., the corner). Hence, we can expect pairs of identical soliton excitations localized around the corner and generate in-gap corner modes with extra charges. The intuitive views can be readily understood by changing the basis:
\begin{equation}
\begin{split}
\begin{pmatrix} \varphi_- \\ \varphi_+  \\ \vartheta_- \\ \vartheta_+ \end{pmatrix} = \frac{1}{2}\begin{pmatrix} 1 & 1 & 1 & -1 \\ 1 & 1 & -1 & 1  \\ 1 & -1 & 1 & 1 \\ -1 & 1 & 1 & 1 \end{pmatrix} \begin{pmatrix} \varphi_1 \\ \varphi_2  \\ \vartheta_1 \\ \vartheta_2 \end{pmatrix} .
\end{split}
\end{equation} 
In the new basis $\begin{pmatrix} \varphi_+ & \varphi_-  & \vartheta_+ & \vartheta_- \end{pmatrix}^T$, the Bosonized Hamiltonian becomes:
\begin{equation*}
\begin{split}
    H^{(B)} & = H^{(B)}_+ + H^{(B)}_-,
\end{split}
\end{equation*}
where $H^{(B)}_\pm$ are:
\begin{equation}\label{eq:26}
\begin{split}
H^{(B)}_{\pm} & = \frac{v}{2}[(\partial_x\varphi_\pm)^2 + (\partial_x\vartheta_\pm)^2 \mp \frac{2\delta v}{v}\partial_x\varphi_\pm\partial_x\vartheta_\pm] 
\\ & \quad \pm\frac{m}{\pi a}\cos(\sqrt{4\pi}\varphi_{\pm}+\theta) .
\end{split}
\end{equation}
According to Eq.~(\ref{eq:03}) and Eq.~(\ref{eq:05}), $H^{(B)}_\pm$ can be regarded as the Bosonized sub-Hamiltonian describing the massive Dirac fermions at the momenta $\pm k_0$ in Fig.~\ref{fig:02}. For a typical case, we have $\frac{\delta v}{v}\ll1$ such that terms with $\frac{\delta v}{v}$ can be neglected, and $H_\pm$ can be regarded as Gaussian models with the sine-Gordon terms \cite{book,book:17993,book:129250}. The coupling constants in Eq.~(\ref{eq:26}) correspond to the kink solution instead of the breather solution in a sine-Gordon problem, indicating both of $H^{(B)}_\pm$  produce the same amount of topological charges \cite{book:17993,book:129250}.

\par As will be further discussed below, instead of having topological in-gap corner zero modes, we will have topologically protected extra charge around the corner. Note that this charge is calculated with respect to the ground state \cite{PhysRevB.100.205126, PhysRevB.22.2099, RevModPhys.60.781}, which can be obtained through computing the \emph{charge fluctuation} in a finite-sized thin film and summing the value around the corner:
\begin{equation}
    \Delta\rho(x,y) = e\sum_{i=1}^{N_{occ.}} \braket{\psi_{i}(x,y) | \psi_{i}(x,y)} - eN_{occ.}^{(site)},
\end{equation}
where \(N_{occ.}\) is the number of the occupied bands and \(N_{occ.}^{(site)}\) is the number of the occupied bands per site. For our convenience, we will employ \emph{particle number fluctuation} $\Delta N(x,y)=\frac{\Delta\rho(x,y)}{e}$ in subsequent calculations.

%--------------------------------------------------
\subsection{Symmetry properties}
Here, we discuss the symmetries of the systems under investigation. Throughout this article, we focus on time-reversal invariant settings, but the particle-hole and chiral symmetries can be broken by either the nonzero velocity difference \(\delta v\neq0\) or the coexistence of the two mass terms. In either case, the energy of the corner modes is not fixed at zero since the system is in class AII in one dimension, characterized by trivial topology \cite{RevModPhys.88.035005}. 
On the other hand, we can have topologically protected corner zero-modes when the particle-hole and chiral symmetries are restored if we have  \(\delta v=0\) with at most a single mass term, i.e., \(\theta=\frac{n\pi}{2}\,,n\in\mathbb{Z}\) on all edges. 
In the latter case, the topological invariant will be characterized by \(\mathbb{Z}_2\) of class DIII in one dimension \cite{PhysRevLett.111.056403,kainaris_santos_gutman_carr_2017}. For broader settings with both broken time-reversal and particle-hole symmetries, the systems fall into class AIII characterized by \(\mathbb{Z}\) in the presence of the chiral symmetry or class A without topological modes otherwise. Based on the above consideration, our system contains only topologically protected charges instead of states with fixed energies. 

%--------------------------------------------------
\subsection{Fractionalization properties}
\par Our theory hosts an unusual relation between the charge and the spin degree of freedom originating from the helical nature of the edge states of a spin-Chern insulator, a distinct feature from the previous literature, in which the distinct Fermi velocities of different energy band branches lead to the charge-spin mixing \cite{10.1143/PTPS.108.265, PhysRevB.53.9572, PhysRevB.62.16900, PhysRevLett.84.4164, PhysRevB.100.195423}. In contrast, we demonstrate that the charge degree of freedom is intertwined with the spin degree of freedom in Appendix~\ref{section:A} due to the helical edge states, implying that they are not independent sectors. Instead, it is the charge-difference sector that remains independent from the charge sector \cite{PhysRevLett.108.196804}. The two sectors together thus describe the low-energy edge theory. Furthermore, the charge-difference sector denotes the deviation between the time-reversal subsystems. This is evident in Eq.~(\ref{eq:10}), where it interacts with the charge sector because of a nonzero difference in Fermi velocity. Still, solitons in our system demonstrate an unusual spin-charge relationship. They feature ``spin-charge separated'' excitations, meaning that they carry fractional charges without spin \cite{PhysRevLett.49.1455}, as depicted in Fig.~\ref{fig:02.4}(B). Due to time-reversal symmetry, there are no pure spinon excitations upon introducing many-body interactions. Instead, the many-body system hosts soliton and antisoliton excitations, carrying a fractional charge of $e\pm Q_s^0$. The two are related to each other by the time-reversal operator for half-integer spins \cite{PhysRevB.22.2099, PhysRevLett.101.086802, PhysRevLett.46.738, PhysRevLett.49.1455}, as indicated in Fig.~\ref{fig:02.4}(C) and Fig.~\ref{fig:02.4}(D).

\begin{figure}[ht]
  \centering
    \includegraphics[width=8.6cm, height=10cm]{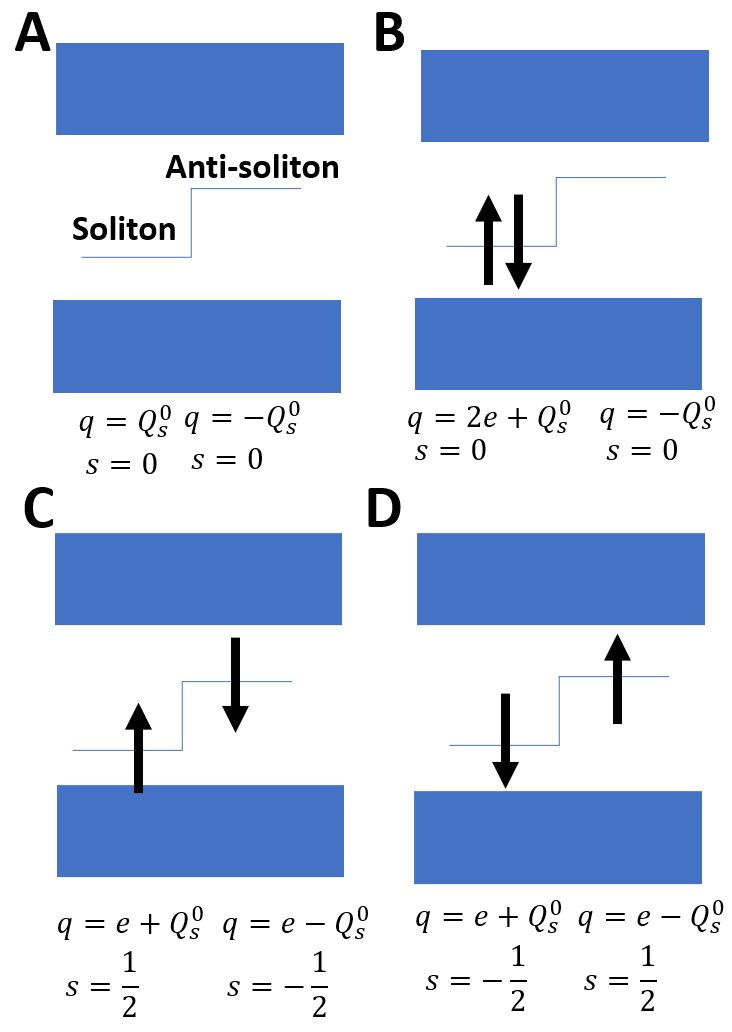}
  \caption{
  (A) In general, without zero-mode-protecting symmetries, soliton and anti-soliton excitations possess distinct energies and carry opposite charges $\pm Q_s^0$. (B) Due to time-reversal symmetry, a filled soliton excitation carries a net charge of $2e+Q_s^0$ without an associated spin. (C)(D) In a time-reversal symmetric many-body system composed of soliton and antisoliton excitations, it is possible to have excitations with a half-integer spin and nonvanishing fractional charges $e \pm Q_s^0$ for (anti)solitons.}
  \label{fig:02.4}
\end{figure}

%--------------------------------------------------
\subsection{Effect of Interactions}
\par Since the interaction between electrons is expected to play a role in the 1D edges \cite{book:129250}, here we discuss its influence on the mass term through the RG analysis.  
To incorporate the interaction in Eq.~(\ref{eq:07}), we include the forward scattering in each time-reversal sector after Bosonizing it \cite{PhysRevLett.121.196801} and obtain two copies of helical Luttinger liquid with the corresponding interaction parameters $K_1,K_2$ characterizing the interaction strength in each sector:
\begin{equation}
    H_0^{(B)} = \frac{1}{2}\sum_{\mu=1}^2 v_\mu(K_\mu^{-1}(\partial_x\varphi_\mu)^2+K_\mu(\partial_x\vartheta_\mu)^2).
\end{equation}
In the basis of $\begin{pmatrix} \varphi_1 & \varphi_2  & \vartheta_1 & \vartheta_2 \end{pmatrix}^T$, Eq.~(\ref{eq:17}) can be rewritten as:
\begin{equation} \label{eq:sG-term}
    H_M^{(B)} = \frac{2m}{\pi a}\sin(\sqrt{\pi}(\vartheta_1-\vartheta_2))\sin(\theta+\sqrt{\pi}(\varphi_1+\varphi_2)) .
\end{equation}

If we define $\tilde{m}\equiv \frac{m}{\sqrt{v_1v_2}}a$, the RG flow equations can be obtained from the standard procedure \cite{PhysRevLett.121.196801, PhysRevB.97.045415, book:129250}:
\begin{align}
    \frac{dK_\mu}{dl} & = -\tilde{m}^2\frac{K_\mu^2-1}{2} ,  \label{eq:D5}
    \\ \frac{d\tilde{m}}{dl}   & = \big[ 2-\frac{1}{4}\sum_\mu (K_\mu+K_\mu^{-1}) \big] \tilde{m}, \label{eq:D6}
\end{align}
where we have set $a=a_0e^l$ with the short-distance cutoff $a_0$. Note that Eq.~(\ref{eq:D6}) can be written as:
\begin{align}\label{eq:D7}
    \frac{dm}{dl}  & = \big[ 1-\frac{1}{4}\sum_\mu (K_\mu+K_\mu^{-1}) \big] m. 
\end{align}

\par Since $K_\mu+K_\mu^{-1}\geq2$, the corresponding operator in Eq.~(\ref{eq:sG-term}) is at most marginally relevant (in the RG sense). It can be seen from Eqs.~(\ref{eq:D5}) and (\ref{eq:D6}) that $K_\mu$ and $m$ tend to flow toward the fixed points $(K_\mu,m) \to (1,m^*)$ or $ (K_\mu^*,0)$, with some renormalized values $m^*$ and $K_\mu^*$.
Intuitively, when $K_\mu(l=0)$ deviates from unity, $m$ decreases to zero faster than the evolution of $K_\mu$ to unity. Otherwise, the sine-Gordon term is marginal, and the system scales as its noninteracting version. To demonstrate this behavior, we analyze the RG flow around the fixed point $K_\mu=1$ by setting $K_\mu=1+\delta K_\mu$ with $\delta K_\mu\ll1$ and expanding Eqs.~(\ref{eq:D5}) and (\ref{eq:D6}) to the third-order of perturbations $p$, where $p\in\{\delta K_\mu, m\}$. 
This procedure gives 
\begin{align}
    \frac{d(\delta K_\mu)^2}{dl} & = -2\tilde{m}^2(\delta K_\mu)^2  , \label{eq:D8}
    \\ \frac{d\tilde{m}^2}{dl}   & = (2-\frac{1}{4}\sum_\mu(\delta K_\mu)^2)\tilde{m}^2. \label{eq:D9}
\end{align}
By setting $x_\pm=\tilde{m}^2\pm\frac{1}{8}\sum_\mu(\delta K_\mu)^2$, we can solve Eqs.~(\ref{eq:D8}) and (\ref{eq:D9}) exactly. The solutions after transforming $x_\pm$ back into $\hat{m}\equiv\frac{\tilde{m}}{a}=\frac{m}{\sqrt{v_1v_2}},r_K=\sqrt{\sum_\mu(\delta K_\mu)^2}$ are:
\begin{align}
    r_K(l) & = r_K(l=0) e^{-l} , \label{eq:D10}
    \\ \hat{m}^2(l)   & = \hat{m}^2(l=0)+\frac{e^{-4l}-1}{16}r_K^2(l=0). \label{eq:D11}
\end{align}
The trajectories of Eqs.~(\ref{eq:D10}) and (\ref{eq:D11}) with several sets of initial values $\{r_K(l=0)\}$ and $\{\hat{m}(l=0)\}$ are plotted in Fig.~\ref{fig:02.5}. According to Eqs.~(\ref{eq:D10}) and (\ref{eq:D11}), $m$ becomes irrelevant when $r_K(l=0)\geq4\hat{m}(l=0)$. On the other hand, the sine-Gordon term characterized by $m$ is marginal at large $l$ when $r_K(l=0)<4\hat{m}(l=0)$, which can be better captured if we expand Eqs.~(\ref{eq:D5}) and (\ref{eq:D7}) only to the first order of $\delta K_\mu$:
\begin{align}
    \frac{d\delta K_\mu}{dl} & = -\tilde{m}^2\delta K_\mu , \label{eq:D12}
    \\ \frac{dm}{dl} & = 0. \label{eq:D13}
\end{align}
Eqs.~(\ref{eq:D12}) and (\ref{eq:D13}) indicate that $\lvert\delta K_\mu\rvert$ decrease due to finite $m$ while $m$ remains unchanged during the RG flow. 

\par While the above RG analysis applies to infinite systems where the flow is governed by the initial values, in realistic systems, the RG flow can be stopped earlier due to finite size or finite temperature, thus preventing it from reaching the gapless fixed point with $m=0$ \cite{PhysRevLett.121.196801, PhysRevLett.123.036803}.
For instance, in a finite edge, the RG flow only evolves up to the scale $l\sim\text{ln}(L/a_0)$ with the edge length $L$, where the system has finite mass terms characterized by the renormalized $m^*$ value. 
As a result, time-reversal soliton pairs can still exist in a finite-sized thin film, even in an interacting system where $K_\mu$ deviates from unity.

\begin{figure}[ht]
  \centering
    \includegraphics[width=8.6cm, height=6cm]{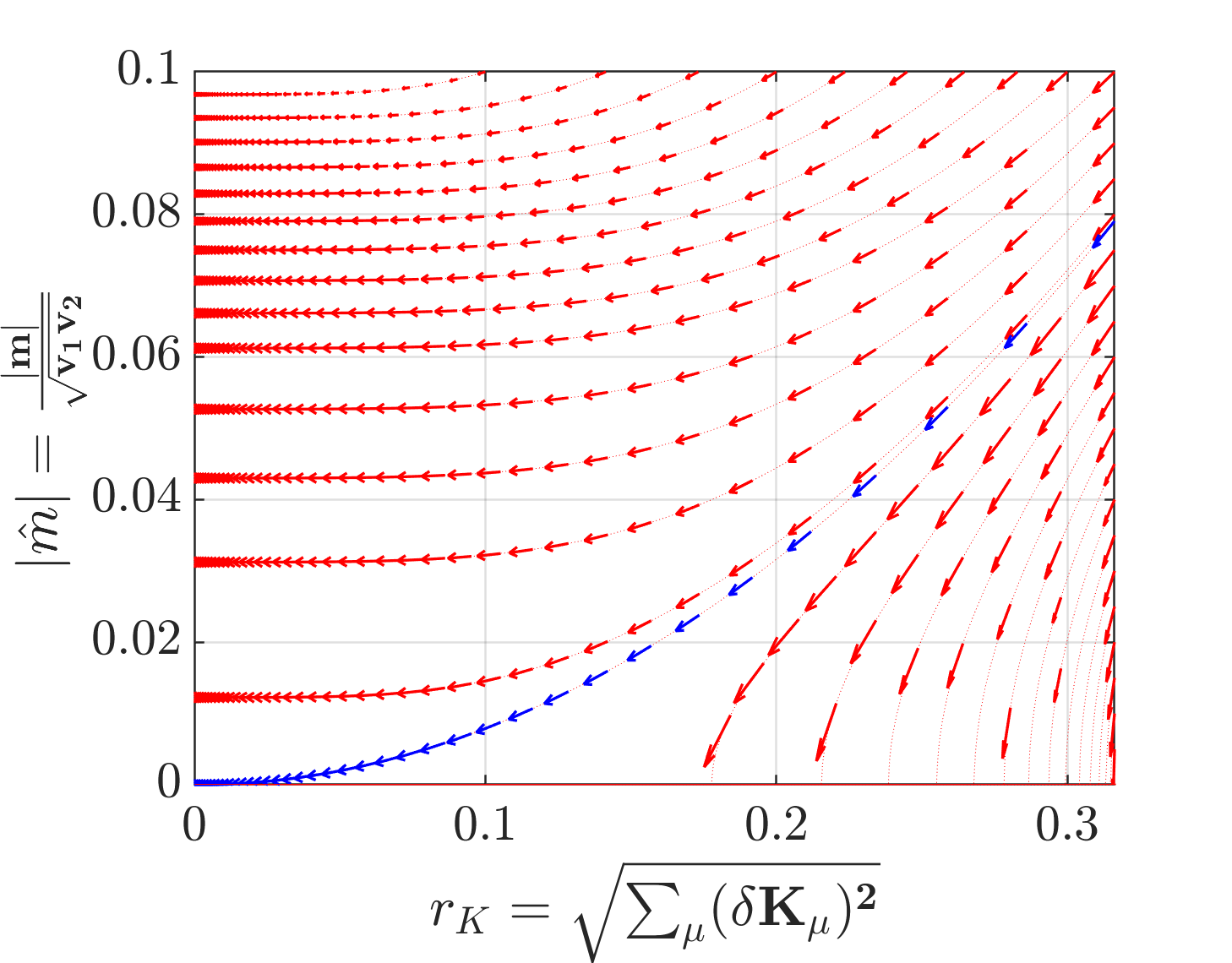}
  \caption{Solution of the RG flows in Eq.~(\ref{eq:D8}) and Eq.~(\ref{eq:D9}) described by Eq.~(\ref{eq:D10}) and Eq.~(\ref{eq:D11}). The blue trajectory indicates the critical points where $r_K(l=0)=4\hat{m}(l=0)$.}
  \label{fig:02.5}
\end{figure}

% ============================================
\section{Application to an extended Haldane model} \label{section:III}
\par In this section, we illustrate the previous results with a specific example from a time-reversal-symmetric extension of the Haldane model \cite{PhysRevB.104.L201110}. Before adding the $S_z$-mixing SOC,
gapless edge states emerge due to its spin-Chern number $C_s=2$.  
The presence of $S_z$-mixing SOC opens small gaps in the edge band structure. 
The Hamiltonian of this model is:
\begin{eqnarray}\label{eq:22_new} 
\begin{split}
    H = & \sum_{\sigma}[d_1(\vec{k})a_{\sigma}^{\dagger}b_{\sigma} - id_2(\vec{k})a_{\sigma}^{\dagger}b_{\sigma} + H.c. \\
       & + (d_3(\vec{k})\text{sign}({\sigma}) + m)(a_{\sigma}^{\dagger}a_{\sigma} - b_{\sigma}^{\dagger}b_{\sigma})]\\
       & + i\lambda_R\sum_{j=1}^{3}(\vec{c}_j\times\vec{S})_{z}^{\sigma\sigma'}e^{i\vec{k}\cdot\vec{c}_j}a_{\sigma}^{\dagger}b_{\sigma'} + H.c.,
\end{split}
\end{eqnarray} 
where $\sigma$ is the spin index with $a_{\sigma}^{\dagger},b_{\sigma}^{\dagger}$ be the creation operator of a spin $\sigma$ electron on different sublattice. The $d_1(\vec{k}),d_2(\vec{k}),d_3(\vec{k})$ are defined to be $d_1=\sum_{j=1}^{3}[t_1\cos(\vec{k})\cdot\vec{a}_j+t_3\cos(\vec{k}\cdot\vec{c}_j)]$, $d_2=\sum_{j=1}^{3}[-t_1\sin(\vec{k}\cdot\vec{a}_j)-t_3\sin(\vec{k}\cdot\vec{c}_j)]$, and $d_3=\sum_{j=1}^{6}t_2(-1)^j\sin(\vec{k}\cdot\vec{b}_j)$, where $\vec{a}_j,\vec{b}_j,\vec{c}_j$ are the vectors connecting a site to its first-, second-, and third-nearest neighbors on a honeycomb lattice, respectively. In the following calculations, we set $t_1=1$, $m=0.1$, $\lambda_R=0.3$, and $t_2=t_3=0.6$. With these parameters, the model has spin-Chern number $C_s=2$ and $S_z$-mixing SOC gaps of edge states on its armchair edges (see Fig.~\ref{fig:03}). In both cases, the centers of the gapped Dirac cones are shifted in both $k$-space and energy, indicating a $k$-dependent mass term in the low-energy theory. Still, as a zeroth order approximation in $k$ for the mass term, our theory still well explained the phenomena of the low-energy theories in these cases. 

\begin{figure}[ht]
  \centering
    \includegraphics[width=8.6cm, height=5cm]{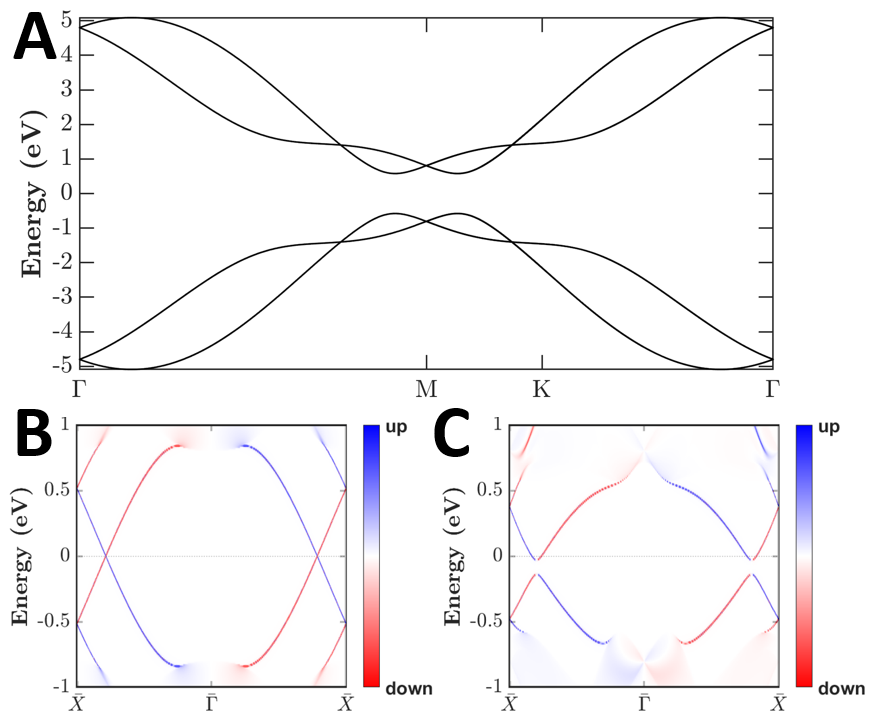}
  \caption{(A.) The bulk band structure of the Hamiltonian in Eq.~\ref{eq:22_new}. (B.) The spin-resolved spectrum of the gapless edge states (C.) The spin-resolved spectrum of the edge states that are gapped by the Rashba SOC. The Dirac cones shift toward the Brillouin zone boundary in $k$-space and to lower energy due to the Rashba SOC.}
  \label{fig:03}
\end{figure}

\begin{figure}[ht]
  \centering
    \includegraphics[width=8.6cm, height=4cm]{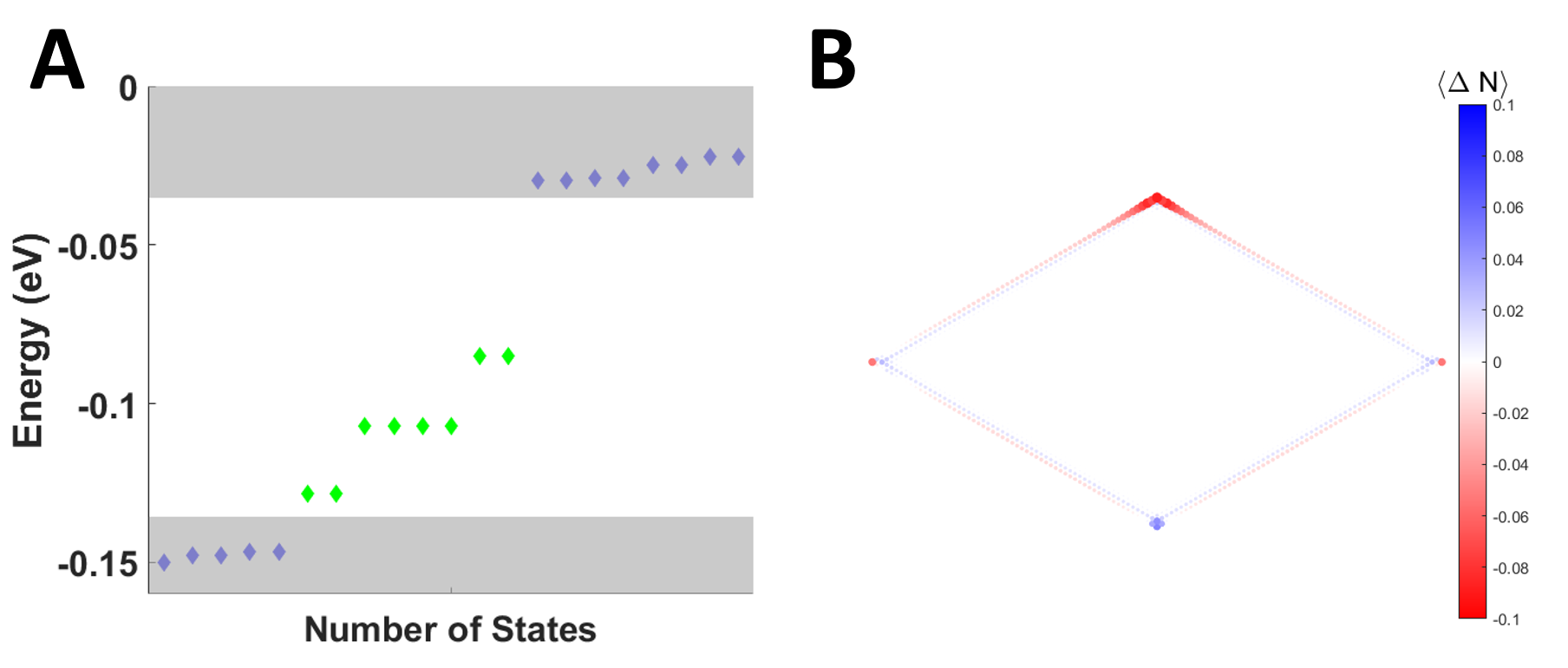}
  \caption{(A.) The $60\times60$ thin film spectrum, in which the in-gap corner modes are marked in green (B.) The particle number fluctuation $\Delta N$ with respect to the ground state in a $60\times60$ thin film. }
  \label{fig:04}
\end{figure}

\par After the Rashba SOC gaps the edge states, the in-gap corner modes emerge in a finite-sized armchair-edged thin film (see Fig.~\ref{fig:04}(A.)), and the extra charges accumulate around the corner of the thin film (see Fig.~\ref{fig:04}(B.)). 

% ----------------------------------------------
\subsection{Edge Hamiltonian for an Arbitrary Edge}
\par In this section, we derive the effective edge Hamiltonian on an arbitrary edge to demonstrate how Rashba SOC induces the edge mass-kink in the model described by Eq.~(\ref{eq:22_new}). The Bloch Hamiltonian of Eq.~(\ref{eq:22_new}) is:
\begin{equation}\label{eq:23_new}
\begin{split}
    H & = [t_1(1+\cos(\frac{k_x+\sqrt{3}k_y}{2})+\cos(\frac{-k_x+\sqrt{3}k_y}{2}))]\sigma_0\tau_1
    \\ & \quad + [t_1(1+\sin(\frac{k_x+\sqrt{3}k_y}{2})+\sin(\frac{-k_x+\sqrt{3}k_y}{2}))]\sigma_0\tau_2
    \\ & \quad +[t_3(2\cos(k_x)+\cos(\sqrt{3}k_y))]\sigma_0\tau_1 +[t_3\sin(\sqrt{3}k_y)]\sigma_0\tau_2
    \\ & \quad + 2t_2(\sin(\frac{k_x+\sqrt{3}k_y}{2})-\sin(\frac{-k_x+\sqrt{3}k_y}{2}))\sigma_3\tau_3
    \\ & \quad + m\sigma_0\tau_3 - 2t_2\sin(k_x)\sigma_3\tau_3
    \\ & \quad + 2\lambda_R\sin(\sqrt{3}k_y)\sigma_1\tau_1 - 2\sqrt{3}\lambda_R\sin(k_x)\sigma_2\tau_1
    \\ & \quad + 4\lambda_R\sin(\frac{k_x+\sqrt{3}k_y}{2})\sin(\frac{-k_x+\sqrt{3}k_y}{2})\sigma_1\tau_2,
\end{split}
\end{equation}
where 
% $\sigma$ are the Pauli matrices in the spin degree of freedom and $\tau$ are the Pauli matrices in the sublattice degree of freedom. 
$\sigma$ ($\tau$) are the Pauli matrices in the spin (sublattice) degree of freedom. 
According to its band structure (see Fig.~\ref{fig:03}), the low-energy Hamiltonian is obtained through expanding $(k_x,k_y)$ around point $M$ in its Brillouin Zone:
\begin{equation}\label{eq:24_new}
\begin{split}
    H & = (3t_3-t_1)\sigma_0\tau_1 + (t_1+\sqrt{3}(t_3-t_1)k_y)\sigma_0\tau_2 + m\sigma_0\tau_3 
    \\ & \quad + 2\sqrt{3}\lambda_R(k_y\sigma_1\tau_1 - k_x\sigma_2\tau_1) .
\end{split}
\end{equation}

To understand the low-energy Hamiltonian on arbitrary edges, we describe the coordinates in the low-energy Hamiltonian with a new basis set \cite{PhysRevB.99.155102, Mu2022}:
\begin{equation}
\begin{split}
\begin{pmatrix} k_x \\ k_y \end{pmatrix} = \begin{pmatrix} \sin(\xi) & \cos(\xi) \\ -\cos(\xi) & \sin(\xi) \end{pmatrix} \begin{pmatrix} k_1 \\ k_2 \end{pmatrix},
\end{split}
\end{equation}
and get: 
\begin{equation}
\begin{split}
    H & = (3t_3-t_1)\sigma_0\tau_1 + m\sigma_0\tau_3
    \\ & \quad + (t_1+\sqrt{3}(t_3-t_1)(i\cos(\xi)\partial_x\pm k_0\sin(\xi))\sigma_0\tau_2 
    \\ & \quad + 2\sqrt{3}\lambda_R(i\cos(\xi)\partial_x\pm k_0\sin(\xi))\sigma_1\tau_1 
    \\ & \quad - 2\sqrt{3}\lambda_R(-i\sin(\xi)\partial_x\pm k_0\cos(\xi))\sigma_2\tau_1 .
\end{split}
\end{equation}
By using the ansatz $e^{\eta x}\phi$, we can obtain the chiral edge states before adding the Rashba term:
\begin{equation}\label{eq:27}
\begin{split}
    \ket{R_{\pm}} & = N_{\pm}\begin{pmatrix} \frac{A_\pm(\xi)e^{i\phi_\pm(\xi)}}{m} & -1 & 0 & 0 \end{pmatrix}^T , 
    \\ \ket{L_{\pm}} & = N_{\pm}\begin{pmatrix} 0 & 0 &  \frac{A_\pm(\xi)e^{i\phi_\pm(\xi)}}{m}  & -1 \end{pmatrix}^T,
\end{split}
\end{equation}
and the Rashba term:
%\begin{equation}\label{eq:28}
%    2\sqrt{3}\lambda_R\begin{pmatrix} 0 & i(\eta\pm k_0)e^{-i\xi}\tau_1 
%    \\ i(\eta\mp k_0)e^{i\xi}\tau_1 & 0 \end{pmatrix} ,
%\end{equation}
\begin{equation}\label{eq:28}
    i2\sqrt{3}\lambda_R[(\eta\cos(\xi)\mp ik_0\sin(\xi))\sigma_1 + (\eta\sin(\xi)+ik_0\cos(\xi))\sigma_2]\tau_1 ,
\end{equation}
where:
\begin{equation*}
\begin{split}
A_\pm(\xi) & = \|(3t_3-t_1)-i(t_1+\sqrt{3}(t_3-t_1)k_\pm)\| ,
\\ \phi_\pm(\xi) & = Arg((3t_3-t_1)-i(t_1+\sqrt{3}(t_3-t_1)k_\pm))
\\ k_\pm & = i\eta \cos(\xi)\pm k_0\sin(\xi),
\end{split}
\end{equation*}
and $N_{\pm}$ is the normalization factor. Since the edge states are very localized, we assume $\eta\gg 1$ such that Eq.~\ref{eq:28} can be approximated by:
\begin{equation}
    i2\sqrt{3}\lambda_R\eta[\cos(\xi)\sigma_1 + \sin(\xi)\sigma_2]\tau_1 ,
\end{equation}
with $\phi_\pm(\xi)\cong0$. The matrix elements of the mass term $m_\pm$, for example, are $\bra{R_\pm}\Delta H\ket{L_\mp}$, where $\Delta H$ is the Hamiltonian in Eq.~(\ref{eq:28}). Therefore, the phase of $m_\pm$ is $e^{-i(\xi+\frac{\pi}{2})}$, which has different values on different edges, indicating the Rashba SOC induces mass-kinks on the edges.

% ==============================================
\section{Summary and Conclusions} \label{section:IV}
\par We derive the mass-kink mechanism in a 1D system with two bulk Dirac cones related by time-reversal symmetry using Bosonization techniques. Different bulk Dirac cones generate two solitons on the domain wall by imposing gap-opening time-reversal-symmetric perturbations. The value of the extra charge on the domain wall is thus twice the value of the phase difference in the mass term on the Dirac cones. The theory is applied to the edge of a spin-Chern insulator with $C_s=2$, in which the corner modes in a finite-sized thin film result from solitons generated through the mass-kink mechanism induced by $S_z$-mixing SOC. We consider the model proposed in \cite{PhysRevB.104.L201110} as an example. By investigating the phase evolution of the corner modes along the edge of a finite-sized thin film, we show that $S_z$-mixing SOC opens the gapless edge states and, indeed, induces solitons through the mass-kink mechanism. The phase evolutions also support the viewpoint that the pairwise in-gap corner modes are generated by different edge Dirac cones. 

With these results in mind, we construct an HO topological phase with protected corner charges from a spin-Chern insulator with $C_s=2$.
Since the key ingredient, the complex mass term with the edge-dependent phase is generically present here (induced, for example, by the $S_z$-mixing SOC), our analysis provides strong evidence that the results hold for general systems with spin-Chern number $C_s=2$. 
Our formalism can be extended straightforwardly to treat systems with a higher number of time-reversal-related Dirac cones, 
indicating that time-reversal soliton pairs with protected corner charges can appear on the edges of spin-Chern insulators with even spin-Chern numbers $C_s=2n, n>1$ more generally. Similar arguments would also apply to $\hat{w}$-Chern insulators with a non-zero even $\hat{w}$-Chern number and $\hat{w}$-symmetry breaking perturbations \cite{PhysRevB.80.125327,shulman2010robust,wang2023feature}.

% ==============================================
\section*{Acknowledgement}
\par The authors thank Yun-Chung Chen and Benjamin J. Wieder for useful discussions. The work at Northeastern University was supported by the National Science Foundation through NSF-ExpandQISE award \#2329067 and it benefited from Northeastern University’s Advanced Scientific Computation Center and the Discovery Cluster. H.L. acknowledges the support by the National Science and Technology Council (NSTC) in Taiwan under grant number MOST 111-2112-M-001-057-MY3. C.H. acknowledges the financial support from the National Science and Technology Council (NSTC), Taiwan through NSTC-112-2112-M-001-025-MY3.

% ==============================================
% The \nocite command causes all entries in a bibliography to be printed out
% whether or not they are actually referenced in the text. This is appropriate
% for the sample file to show the different styles of references, but authors
% most likely will not want to use it.
% \nocite{*}

% ==============================================
\appendix
% ===================================================
\section{Bosonization conventions}\label{section:A}
\par By using the notation in Appendix A of Ref.\cite{PhysRevB.92.035139}, the Bosonic representations of the right and left moving modes are:
\begin{equation}\label{eq:01}
\begin{split}
    R_\sigma(x) & = \frac{\kappa_\sigma}{\sqrt{2\pi a}}e^{i\sqrt{4\pi}\phi_{\sigma}^{R}(x)} , \\
     L_\sigma(x) & = \frac{\kappa_\sigma}{\sqrt{2\pi a}}e^{-i\sqrt{4\pi}\phi_{\sigma}^{L}(x)},
\end{split}
\end{equation}
where \(a\rightarrow 0\) is the Bosonic UV cutoff, \(\sigma=1,2\) denotes the time-reversal sector and \(\kappa_\sigma\) is the Klein factors satisfying \( \{ \kappa_\sigma,\kappa_{\sigma'}\}=2\delta_{\sigma\sigma'}\) and \(\kappa_\sigma^2=1, \kappa_1\kappa_2=-\kappa_2\kappa_1=i\). Since all right moving modes carry spin \(\uparrow\) while all left moving modes carry spin \(\downarrow\), we suppress the index for spin in Eq.(\ref{eq:01}).

To have correct anti-commutation relation \(\{R_\sigma, L_\sigma\}=0\) and \(\{R_\sigma(x), R_\sigma(y)\}=0\), the Boson field should follow the commutation relation:
\begin{equation}\label{eq:02}
\begin{split}
    & [\phi_\sigma^R(x),\phi_{\sigma'}^L(x)]=\frac{i}{4}\delta_{\sigma\sigma'}  ,
    \\ & [\phi_\sigma^\eta(x),\phi_{\sigma'}^{\eta'}(y)]=\frac{i}{4}\eta\delta_{\eta,\eta'}\delta_{\sigma\sigma'}\text{sign}(x-y) .
\end{split}
\end{equation}

The conjugated variables are defined as:
\begin{eqnarray}\label{eq:03}
\begin{split}
    & \varphi_\sigma \equiv \phi^R_\sigma + \phi^L_\sigma , \\
    & \vartheta_\sigma \equiv \phi^L_\sigma - \phi^R_\sigma .
\end{split}
\end{eqnarray}

Then, we have:
\begin{align}\label{eq:04}
    R^\dagger_\sigma R_\sigma + L^\dagger_\sigma L_\sigma = \frac{1}{\sqrt{\pi}}\partial_x\varphi_\sigma .
\end{align}

Further, we can define the charge degree of freedom and the charge-difference degree of freedom:
\begin{equation}\label{eq:05}
\begin{split}
    \varphi_c & \equiv \frac{1}{\sqrt{2}}(\varphi_1 + \varphi_2) ,
    \\ \vartheta_c & \equiv \frac{1}{\sqrt{2}}(\vartheta_1 + \vartheta_2) ,
    \\ \varphi_d & \equiv \frac{1}{\sqrt{2}}(\varphi_1 - \varphi_2) ,
    \\ \vartheta_d & \equiv \frac{1}{\sqrt{2}}(\vartheta_1 - \vartheta_2).
\end{split}
\end{equation}
Note that, by using Eq.~(\ref{eq:03}) and Eq.~(\ref{eq:04}), we can get $\partial_x\vartheta_c \propto \rho_s$, where $\rho_s$ is the spin density. The spin degree of freedom and the charge degree of freedom are related through these conjugated variables. Similarly, the spin and charge density differences are also related in the same way.

% =======================================================================================
\section{Basic Properties of Bosonic Fields} \label{section:B}
\begin{theorem}\label{thm:01}
$[\varphi_\sigma(x),\partial_y\vartheta_\sigma(y)]=i\pi\delta(x-y)$\,,\(\sigma = 1,2\)
\end{theorem}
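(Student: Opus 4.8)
The plan is to express $\varphi_\sigma$ and $\vartheta_\sigma$ through the chiral fields $\phi^{R}_\sigma,\phi^{L}_\sigma$ via Eq.~(\ref{eq:03}) and reduce the left-hand side to the elementary commutators collected in Eq.~(\ref{eq:02}). The Klein factors play no role here since they commute with every $\phi^{\eta}_\sigma$, so the whole computation lives in the free-boson sector.

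First I would substitute $\varphi_\sigma = \phi^{R}_\sigma + \phi^{L}_\sigma$ and $\vartheta_\sigma = \phi^{L}_\sigma - \phi^{R}_\sigma$ and expand $[\varphi_\sigma(x),\vartheta_\sigma(y)]$ by bilinearity into the four terms $[\phi^{R}_\sigma(x),\phi^{L}_\sigma(y)]$, $-[\phi^{R}_\sigma(x),\phi^{R}_\sigma(y)]$, $[\phi^{L}_\sigma(x),\phi^{L}_\sigma(y)]$ and $-[\phi^{L}_\sigma(x),\phi^{R}_\sigma(y)]$. The two like-chirality commutators from Eq.~(\ref{eq:02}) differ by the sign of $\eta$ and hence combine into a single term proportional to $\mathrm{sign}(x-y)$, while the two cross-chirality terms are position independent and collapse to an overall constant. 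Thus $[\varphi_\sigma(x),\vartheta_\sigma(y)]$ has the form $(\text{const}) + c\,\mathrm{sign}(x-y)$, with $c$ fixed entirely by the normalization of Eq.~(\ref{eq:02}).

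Finally I would differentiate in $y$: the constant drops out, and the distributional identity $\partial_y\,\mathrm{sign}(x-y) = -2\,\delta(x-y)$ turns the remaining term into $-2c\,\delta(x-y) = i\pi\,\delta(x-y)$, which is the claim. The step I expect to require the most care is pinning down the mixed commutator $[\phi^{R}_\sigma(x),\phi^{L}_\sigma(y)]$ away from coincident arguments, since Eq.~(\ref{eq:02}) records only its value at $x=y$: one must confirm it carries no $\mathrm{sign}(x-y)$ piece of its own, because any such piece would change the coefficient of the final $\delta$-function. This can be checked either from the mode expansions underlying Eq.~(\ref{eq:01}) or, equivalently, by demanding consistency with the $U(1)$ current algebra $R^{\dagger}_\sigma R_\sigma + L^{\dagger}_\sigma L_\sigma = \tfrac{1}{\sqrt\pi}\,\partial_x\varphi_\sigma$ of Eq.~(\ref{eq:04}). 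Everything else is elementary sign-and-factor bookkeeping.
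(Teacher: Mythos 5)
Your strategy is legitimate but genuinely different from the paper's. The paper does not compute with the postulated chiral algebra of Eq.~(\ref{eq:02}) at all: it goes back to the fermionic mode expansion, splits the physical fermion into $R_\mu,L_\mu$ around $\pm k_0$ with the helical spin locking, verifies that each of these modes obeys the standard relation $[b_q^{R_\mu},R_\mu]\propto e^{-iqx}R_\mu$ with the density-fluctuation bosons, and then imports the canonical commutator from the constructive derivation in Miranda's review. The content of that argument is model-specific---it establishes that the two cones at $\pm k_0$ really furnish two independent standard chiral sectors, so the textbook identity applies sector by sector---whereas your computation takes Eq.~(\ref{eq:02}) as an axiom and does linear algebra on top of it. Since the paper itself states Eq.~(\ref{eq:02}) as a convention, your shortcut is defensible, and you correctly isolate the one delicate input: that $[\phi^R_\sigma(x),\phi^L_\sigma(y)]$ is position-independent and therefore cannot contribute to the $\delta$-function.

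The concrete problem is the final coefficient. With Eq.~(\ref{eq:02}) as printed, $[\phi^R_\sigma(x),\phi^R_\sigma(y)]=\tfrac{i}{4}\,\mathrm{sign}(x-y)$ and $[\phi^L_\sigma(x),\phi^L_\sigma(y)]=-\tfrac{i}{4}\,\mathrm{sign}(x-y)$, so your like-chirality combination $-[\phi^R,\phi^R]+[\phi^L,\phi^L]$ yields $c=-\tfrac{i}{2}$, and then $-2c\,\delta(x-y)=i\,\delta(x-y)$, not $i\pi\,\delta(x-y)$. The asserted equality $-2c\,\delta=i\pi\,\delta$ does not follow from the inputs you cite; it would require $[\phi^\eta_\sigma(x),\phi^{\eta}_\sigma(y)]=\tfrac{i\pi}{4}\eta\,\mathrm{sign}(x-y)$ instead. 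This factor of $\pi$ is not cosmetic: the same $\sqrt{4\pi}$ vertex-operator normalization of Eq.~(\ref{eq:01}) underlies Eq.~(\ref{eq:04}) and the sine-Gordon analysis, so you must either exhibit the rescaling of $\varphi_\sigma,\vartheta_\sigma$ that produces the $\pi$, or flag explicitly that Eq.~(\ref{eq:02}) and the statement of the theorem are mutually inconsistent as written. As it stands, your derivation, carried out honestly, proves $[\varphi_\sigma(x),\partial_y\vartheta_\sigma(y)]=i\,\delta(x-y)$ rather than the quoted claim.
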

\begin{proof}
In our model, the fermionic annihilation operator \(\psi\) can be written as:
\begin{eqnarray}
\begin{split}
    \psi = &\sum_{\sigma}(\sum_{k>0}\frac{e^{ikx}}{\sqrt{L}}c_{k,\sigma} + \sum_{k<0}\frac{e^{ikx}}{\sqrt{L}}c_{k,\sigma})  \\
     = &\sum_{\sigma}(\sum_{k=-k_0}^{\infty}\frac{e^{i(k+k_0)x}}{\sqrt{L}}C_{k+k_0,\sigma}\\
      &+  \sum_{k=-\infty}^{k_0}\frac{e^{i(k-k_0)x}}{\sqrt{L}}c_{k-k_0,\sigma})\\
    \equiv& e^{ik_0x}(R_1 + L_2) + e^{-ik_0x}(R_2 + L_1),
\end{split}
\end{eqnarray}
where \(L\rightarrow\infty\) is the size of the system and \(\psi(x+L)=\psi(x)\) is satisfied. The right(left) moving modes are defined as:

\begin{eqnarray}
\begin{split}
    & R_1 = \sum_k \frac{e^{ikx}}{\sqrt{L}}C_{k}^{R_1}\,,C_{k}^{R_1} \equiv C_{k+k_0,\uparrow} 
    \\ & L_1 = \sum_k \frac{e^{-ikx}}{\sqrt{L}}C_{k}^{L_1}\,,C_{k}^{L_1} \equiv C_{-(k+k_0),\downarrow}
    \\ & R_2 = \sum_k \frac{e^{ikx}}{\sqrt{L}}C_{k}^{R_2}\,,C_{k}^{R_2} \equiv C_{k-k_0,\uparrow}
    \\ & L_2 = \sum_k \frac{e^{-ikx}}{\sqrt{L}}C_{k}^{L_2}\,,C_{k}^{L_2} \equiv C_{-k+k_0,\downarrow}
\end{split}
\end{eqnarray}
In this definition, the right(left) moving modes are filled when \(k<0\) in each summation.
\par Then, following \cite{miranda_2003}, each right(left) moving mode is the coherent state of the corresponding Boson field:
\begin{eqnarray}
\begin{split}
    [b_q^{R_\mu},R_\mu] & = \sqrt{\frac{2\pi}{L^2|q|}}[\sum_kc_{k-q}^{(R_\mu)\dagger}c_k^{R_\mu},\sum_{k'}e^{ik'x}c_{k'}^{R_\mu}]
    \\ & = -\sqrt{\frac{2\pi}{|q|}}\frac{1}{L}\sum_{k,k'}\delta_{k-q,k'}e^{ik'x}c_k^{R_\mu}
    \\ &  = -\sqrt{\frac{2\pi}{|q|L}}e^{-iqx}R_\mu ,
\end{split}
\end{eqnarray}
\begin{equation}
\begin{split}
    [b_q^{L_\mu},L_\mu] & = \sqrt{\frac{2\pi}{L^2|q|}}[\sum_kc_{k-q}^{(L_\mu)\dagger}c_k^{L_\mu},\sum_{k'}e^{-ik'x}c_{k'}^{L_\mu}]
    \\ & = -\sqrt{\frac{2\pi}{|q|}}\frac{1}{L}\sum_{k,k'}\delta_{k-q,k'}e^{-ik'x}c_k^{L_\mu}
     \\ & = -\sqrt{\frac{2\pi}{|q|L}}e^{iqx}L_\mu,
\end{split}
\end{equation}
where \(b_q^{R_\mu(L_\mu)}\) is the Bosonic annihilation operator for the density fluctuation of each right(left) moving mode. Since \(R_\mu,L_\mu\) satisfy the commutation relation for right(left) moving modes defined in \cite{miranda_2003} for each \(\mu\). Therefore, as derived for the right(left) moving modes in \cite{miranda_2003}, \([\varphi_\mu(x),\partial_y\vartheta_\mu(y)]=i\pi\delta(x-y)\) for \(\mu=1,2\) denoting the time-reversal sector. 
\end{proof}
Theorem (\ref{thm:01}) shows that \(\varphi_\mu\) and \(\vartheta_\mu\) are indeed canonical conjugate variables. By using Theorem (\ref{thm:01}), the commutation relation \([\varphi_{c(d)}(x),\partial_y\vartheta_{c(d)}(y)]=i\pi\delta(x-y)\) can also be proofed, which means that the Boson field and its dual field in charge sector and charge-difference sector are canonical conjugate variables, too. Besides, this also provides a foundation for Eq.(\ref{eq:04}) by using the corresponding derivation in \cite{miranda_2003}.
\\
% =======================================================================================
\section{General Mass Term} \label{section:C}
\par In general, the gap-opening perturbation is not necessary to be time-reversal symmetric. Therefore, the phases of the mass-term of different Dirac cones can differ in an arbitrary phase $\alpha$. Then, Eq.~(\ref{eq:15}) becomes:

\begin{equation}
    H_M = me^{-i\theta}R^\dagger_1L_2 +  me^{-i(\theta+\alpha)}R^\dagger_2L_1 + H.c.,
\end{equation}
which can be further Bosonized into:
\begin{equation}
\begin{split}
    H_M^{(B)} & = \frac{m}{\pi a}(\cos(\sqrt{2\pi}(\varphi_c-\vartheta_d)+\theta) 
    \\ & \quad + \cos(\sqrt{2\pi}(\varphi_c+\vartheta_d)+\theta+\alpha)) .
\end{split}
\end{equation}
It has minima at,
\begin{align}
    & \vartheta_d=-\frac{\alpha}{\sqrt{8\pi}} + (m-n)\sqrt{\frac{\pi}{8}} , \\ 
    & \varphi_c=-\frac{\theta}{\sqrt{2\pi}}-\frac{\alpha}{\sqrt{8\pi}} + (m+n)\sqrt{\frac{\pi}{8}},
\end{align}
for some $m,n\in\mathbb{Z}$ such that $m+n\in2\mathbb{Z}$. Therefore, the approximation made in Eq.~(\ref{eq:25}) is still valid. That is, there are always extra charges around the corner as long as a $k$-independent perturbation opens the gapless surface Dirac cones that generate different phases of the mass term on adjacent edges. And the amount of the extra charge can be calculated through Eq.~(\ref{eq:25.5}).

% =======================================================================================

\bibliographystyle{apsrev4-2}
\bibliography{ref}

\end{document}